\documentclass[aps,prl,twocolumn,showpacs,superscriptaddress,10pt]{revtex4-1}
\usepackage[utf8]{inputenc} 
\usepackage[T1]{fontenc} 
\usepackage{textcomp,lmodern} 
\usepackage{amssymb,amsmath,amsfonts,amsthm,bbm} 

\begin{document}
\newcommand{\kt}[1]{\ensuremath{|#1\rangle}}
\newcommand{\br}[1]{\ensuremath {\langle #1|}}
\newcommand{\HS}{\mathcal{H}}
\newcommand{\bk}[2]{\ensuremath {\langle #1|#2 \rangle}}
\newcommand{\ckt}[1]{\ensuremath{|#1\}}}
\newcommand{\cbr}[1]{\ensuremath {\{#1|}}
\newcommand{\R}{\mathbb{R}}
\renewcommand{\v}[1]{\mathbf{#1}}
\newcommand{\m}[1]{\mathrm{#1}}
\newcommand{\vp}{\mathbf{p}}
\newcommand{\AO}{\mathcal{A}}
\newcommand{\hS}{\hat{S}}  
\newcommand{\mS}{\mathrm{S}}

\newcommand{\anfEngl}[1]{``#1''} 
\newcommand{\ens}[0]{\ensuremath} 
\newcommand{\ket}[1]{\ens{|#1\rangle}} 
\newcommand{\bra}[1]{\ens{\langle#1|}} 
\newcommand{\x}[0]{\ens{\otimes}} 
\newcommand{\isom}[0]{\ens{\cong}} 
\newcommand{\Mge}[2]{\ens{\left\lbrace #1|\,#2 \right\rbrace}} 
\newcommand{\Mg}[1]{\ens{\lbrace #1 \rbrace}} 
\newcommand{\MgE}[1]{\ens{\Mg{1,\dots,#1}}} 
\newcommand{\betrag}[1]{\ens{|#1|}} 
\newcommand{\Fkt}[3]{\ens{#1 : #2 \rightarrow #3}} 
\newcommand{\Eins}[0]{\ens{\mathbbm{1}}}
\newcommand{\iE}[0]{\ens{\mathrm{i}}} 
\renewcommand{\phi}[0]{\ens{\varphi}} 
\newtheorem{Definition}{Definition} 
\newtheorem{Lemma}[Definition]{Lemma} 
\newtheorem{Theorem}[Definition]{Theorem}
\newtheorem{Corollary}[Definition]{Corollary} 
\newcommand{\cA}[0]{\ens{\mathcal{A}}}
\newcommand{\cB}[0]{\ens{\mathcal{B}}}
\newcommand{\cH}[0]{\ens{\mathcal{H}}}
\newcommand{\cM}[0]{\ens{\mathcal{M}}}
\newcommand{\N}[0]{\ens{\mathbb{N}}}
\newcommand{\C}[0]{\ens{\mathbb{C}}}
\newcommand{\diag}[0]{\mathrm{diag}}

\title{Observables Can Be Tailored to Make Any Pure State Entangled (or Not)}

\author{N.\,L.\,Harshman}
\affiliation{Department of Physics, American University, 4400 Massachusetts Ave., NW, Washington, DC 20016-8058}

\author{Kedar S. Ranade}
\affiliation{Institut f\"ur Quantenphysik, Universit\"at Ulm, Albert-Einstein-Allee 11,
  D-89069 Ulm, Deutschland/Germany}

\begin{abstract}
We show that for a finite-dimensional Hilbert space, there exist observables that induce a tensor product structure such that the entanglement properties of any pure state can be tailored.  In particular, we provide an explicit, finite method for constructing observables in an unstructured $d$-dimensional system so that an arbitrary known pure state has any Schmidt decomposition with respect to an induced bipartite tensor product structure.  In effect, this article demonstrates that in a finite-dimensional Hilbert space, entanglement properties can always be shifted from the state to the observables and all pure states are equivalent as entanglement resources in the ideal case of complete control of observables.
\end{abstract}
\pacs{03.65.Aa, 03.65.Ud}

\maketitle

The entanglement of a quantum state is only defined with respect to a tensor product structure within the Hilbert space that represents the quantum system. In turn, a tensor product structure of the Hilbert space is induced by the algebra of observables.  Zanardi and colleagues~\cite{zanardi_virtual_2001,zanardi_quantum_2004} have provided criteria for the algebra of observables of a finite-dimensional system to induce a tensor product structure.  The algebra of observables must be partitioned into subalgebras that satisfy two mathematical requirements, the subalgebras must be independent and complete (see Corollary \ref{ZanardiMulti} for a precise formulation of Zanardi's Theorem), and one physical requirement, the subalgebras must be locally accessible.  Such observable-induced partitions of the Hilbert space have been referred to as virtual subsystems and can be thought of as a generalization from entanglement between subsystems to entanglement between degrees of freedom (see also~\cite{barnum,delatorre}).  This mathematical framework has found applications to studies of multi-level encoding~\cite{mle}, decoherence~\cite{deco}, operator quantum error correction~\cite{oqec}, entanglement in fermionic systems~\cite{fer}, single-particle entanglement~\cite{cunha, osid}, and entanglement in scattering systems~\cite{scat}.

In this Letter, we extend this mathematical framework and prove what we call the \emph{Tailored Observables Theorem} (Theorem \ref{tailor}): observables can be constructed such that \emph{any} pure state in a finite-dimensional Hilbert space $\HS=\mathbb{C}^d$ has \emph{any} amount of entanglement possible for \emph{any} given factorization of the dimension $d$ of $\HS$. This means all pure states are equivalent as entanglement resources in the ideal case of complete control of observables.  To establish the framework, we provide a brief, relatively self-contained introduction to Zanardi's Theorem 
and obtain some necessary preliminary results about observable algebras in finite dimensions.  We then prove Theorem \ref{tailor}, which applies to bipartite tensor product structures, and present an illustrative example.  We will also provide a corollary of the theorem (Corollary \ref{tailor2}) applied to multipartite tensor product structures.  Before delving into the technical details, we present a more intuitive discussion of this result.

Consider a finite-dimensional Hilbert space $\HS=\mathbb{C}^d$.  The full matrix algebra $\mathcal{M}_d$ of complex $d \times d$ matrices acting on $\mathbb{C}^d$ will contain the representations of measurements and interactions that act on the states of the physical system, i.e.\ the algebra of observables $\mathcal{A}\subseteq \mathcal{M}_d$.   If the dimension $d$ can be factorized as $d= k_1 \cdot k_2 \cdot \ldots \cdot k_N$, this Hilbert space could represent states of a quantum system composed from $N$ subsystems each represented by Hilbert spaces $\HS_i = \mathbb{C}^{k_i}$.  For example, if $d=8$ then the system could be constructed from one qubit and one ququart ($N=2$, $k_1=2, k_2=4$) or three qubits ($N=3$, $k_1=k_2=k_3=2$).  By the process of subsystem composition, the total Hilbert space $\HS$ would inherit a tensor product structure $\HS \isom \bigotimes_{i=1}^N \HS_i = \bigotimes_{i=1}^N \mathbb{C}^{k_i}$.    If additionally the $N$ subsystems are localized into $N$ space-time separated regions, each subsystem would have an observable subalgebra $\mathcal{A}_i\subseteq\mathcal{M}_{k_i}$ that is operationally independent, and there would be exact correspondence between locality in space-time and locality with respect to the tensor product structure.

In contrast, the same Hilbert space $\HS=\mathbb{C}^d$ could represent a quantum system with no \emph{a priori} quantum subsystems such as the lowest $d$ energy levels of an harmonic oscillator.  However, even in such a system with no `natural' subalgebras of observables with which to partition the Hilbert space, the total observable algebra $\mathcal{A}=\mathcal{M}_d$ can `artificially' be divided into subalgebras $\mathcal{A}_i=\mathcal{M}_{k_i}$ that satisfy Zanardi's Theorem.  We provide an explicit constructive method for generating these subalgebras from a finite set of operators.  The generators may look somewhat arbitrary in the unstructured Hilbert space, but they have the correct properties to rigorously define locality, separability and entanglement.  By tailoring these subalgebras to a particular pure state, any entanglement properties for that state can be achieved, including maximal entanglement for any pure state, where the maximum depends on the the dimension.

In some sense, our results are an immediate consequence of the fact that all Hilbert spaces with the same dimension are isomorphic.  If we have a pure state of an unstructured $d$-level system $\kt{\phi}\in\HS=\mathbb{C}^d$ and a pure state $\kt{\phi'}$ of a $d$-level system with a tensor product structure $\HS'=\bigotimes_{i} \HS'_i$, then there will always exist a unitary map $U: \HS\rightarrow \HS'$ such that $U\kt{\phi}=\kt{\phi'}$.  Additionally, if there are local observable algebras $\AO'_i$ acting on each $\HS_i^\prime$, they can be mapped back to algebras $\AO_i= U^\dag\AO'_i U $ that act on $\HS$.  This article explains the conditions on the algebras for this map to exist and to induce a tailored tensor product structure.  Further, we show that each subalgebra $\mathcal{A}_i=\mathcal{M}_{k_i}$ corresponding to the  $\HS_i=\mathbb{C}^{k_i}$ factor of the tensor product can be finitely generated by the $k_i$-dimensional matrix representations of a basis for the $\mathfrak{su}(2)$ Lie algebra.  For a two-dimensional representation, the subalgebra is the Pauli operators, whose completeness as an algebraic basis for $\mathcal{M}_2$ is well-known, but we include a proof that this property holds true for all finite-dimensional representations of $\mathfrak{su}(2)$.  An alternate approach to generalizing Pauli operators to higher dimensions is taken in \cite{mle}.

Entanglement is detected as coherences between non-local observables, for example in the form of Bell-type inequalities.  This gives another way to look at our results: we give a method to construct observables that induce a notion of locality such that the intrinsic self-coherence of any pure state can be exploited as entanglement.  That the choice of observables, or equivalently degrees of freedom, used to describe a system can be tailored to suit a particular need is of course well-known in classical and quantum physics.  Certain observables can be preferred because of the form of interactions, the sources of error and decoherence, the physical accessibility of measurement and control, or for other reasons.  For example, in bound states of a proton and an electron, energy eigenstates are unentangled with respect to the tensor product induced by the center-of-mass/relative observables, but they are highly entangled with respect to the particle observables~\cite{tommasini_hydrogen_1998}.  By turning on external fields, one can induce entanglement between the center-of-mass and relative observables, i.e.~correlate electronic states to motional states of the atom. In this case, and in many others, the presence and dynamics of entanglement can serve as a proxy for the effect of interactions.

We now proceed with the formal statement of our results, beginning with some mathematical terminology for operator algebras.
We consider an (associative) algebra $\cM$ and a subalgebra $\cA \subseteq \cM$ thereof. The
\emph{centralizer} (or, in operator theory, the \emph{commutant}) $\cA^\prime$ of $\cA$ in $\cM$ is
defined as the set of operators in $\cM$ which commute with every element in $\cA$, i.\,e.
\begin{equation}
  \cA^\prime := \Mge{B \in \cM}{(\forall A \in \cA)(AB = BA)};
\end{equation}
this again is an algebra. Of principal interest here are the full matrix algebras
$\cM_d$ on finite-dimensional Hilbert spaces $\C^d$, which can be identified with the $d \times d$ matrices
$\C^{d \times d} = \Mge{(a_{ij})_{i,j = 1}^{d}}{(\forall i,\,j \in \MgE{d})(a_{ij} \in \C)}$.
The algebras $\cM_d$ are \emph{simple}, i.\,e., they contain no non-trivial two-sided ideals,
and \emph{central}, i.\,e. $\cM_d^\prime = \C \Eins_{\cM_d}$. We shall make use of the following lemma
\cite[p. 115]{Knapp}.
\begin{Lemma}[Double centralizer theorem]\label{DCT}\hfill\\
  Consider a finite-dimensional central simple algebra $\cM$ over an arbitrary field.
  Let $\cA \subseteq \cM$ be a simple subalgebra of $\cM$. Then, the centralizer $\cA^\prime$ is
  simple, $\cA^{\prime\prime} = \cA$ and $\dim \cA \cdot \dim \cA^\prime = \dim \cM$.
\end{Lemma}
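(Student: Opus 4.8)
The plan is to give the standard module-theoretic proof built on tensor products of simple algebras over the ground field $k$. First I would record two background facts. (i) If $\cM$ is central simple over $k$ and $\cA$ is simple, then $\cM \otimes_k \cA^{\mathrm{op}}$ is again simple: simplicity of one \emph{central} factor propagates to the whole tensor product, and this is exactly where the centrality hypothesis on $\cM$ enters. (ii) By Artin–Wedderburn, a finite-dimensional simple algebra is $B \cong \mathrm{M}_n(D)$ for a division algebra $D$; every $B$-module is a direct sum $S^{\oplus r}$ of copies of the unique simple module $S$, and $\mathrm{End}_B(S^{\oplus r}) \cong \mathrm{M}_r(D^{\mathrm{op}})$ is itself simple.

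Next I would set $B := \cM \otimes_k \cA^{\mathrm{op}}$ and turn $\cM$ into a left $B$-module by letting $m \otimes a^{\mathrm{op}}$ act as $x \mapsto m x a$. Since left multiplications by $\cM$ and right multiplications by $\cA$ commute, this action is well defined, and $B$ is simple by (i). The crux is to identify $\mathrm{End}_B(\cM)$. Any $k$-linear $f$ commuting with all left multiplications by $\cM$ must be a right multiplication $f = R_b$ with $R_b(x) = xb$; requiring in addition that $f$ commute with right multiplication by every $a \in \cA$ forces $ab = ba$, i.e. $b \in \cA^\prime$. As $b \mapsto R_b$ is an anti-isomorphism, this yields $\mathrm{End}_B(\cM) \cong (\cA^\prime)^{\mathrm{op}}$. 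By (ii) the left-hand side is simple, hence $\cA^\prime$ is simple, which is the first assertion.

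For the dimension identity I would do the bookkeeping with $B \cong \mathrm{M}_n(D)$ and $\cM \cong S^{\oplus r}$, so that $\dim_k S = n\dim_k D$, $\dim_k B = n^2\dim_k D$, $\dim_k \cM = rn\dim_k D$, and $\dim_k \cA^\prime = \dim_k \mathrm{End}_B(\cM) = r^2\dim_k D$. Combining with $\dim_k B = \dim_k \cM \cdot \dim_k \cA$ gives $\dim \cA = n/r$, whence $\dim \cA \cdot \dim \cA^\prime = nr\dim_k D = \dim \cM$. Finally, applying this same dimension formula to the simple algebra $\cA^\prime$ gives $\dim \cA^\prime \cdot \dim \cA^{\prime\prime} = \dim \cM$, so $\dim \cA^{\prime\prime} = \dim \cA$; since trivially $\cA \subseteq \cA^{\prime\prime}$ and both are finite-dimensional of equal dimension, $\cA^{\prime\prime} = \cA$.

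The hard part will not be any single computation but the careful handling of the opposite algebra and the endomorphism-ring identification in the crux step, together with establishing fact (i): the simplicity of $\cM \otimes_k \cA^{\mathrm{op}}$ is what makes the whole argument run, and it fails without the centrality of $\cM$, so that hypothesis must be used exactly there.
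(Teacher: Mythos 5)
The paper does not prove Lemma~\ref{DCT} at all: it is quoted verbatim from the cited reference (Knapp, \emph{Advanced Algebra}, p.~115) and used as a black box in the proof of Theorem~\ref{Zanardi}. So there is no in-paper argument to compare against; what you have written is the standard module-theoretic proof, and it is essentially the one found in the cited textbook. Your argument is correct: the identification $\mathrm{End}_B(\cM) \cong (\cA^\prime)^{\mathrm{op}}$ for $B = \cM \otimes_k \cA^{\mathrm{op}}$ acting by $x \mapsto mxa$ is right (commuting with all left multiplications forces $f = R_{f(1)}$, and compatibility with the $\cA$-action forces $f(1) \in \cA^\prime$, with $b \mapsto R_b$ an anti-homomorphism), the Artin--Wedderburn bookkeeping $\dim\cA = n/r$, $\dim\cA^\prime = r^2\dim_k D$, $\dim\cM = nr\dim_k D$ checks out, and the derivation of $\cA^{\prime\prime} = \cA$ from the dimension identity applied to the simple algebra $\cA^\prime$ together with the trivial inclusion $\cA \subseteq \cA^{\prime\prime}$ is clean. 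The one item you defer rather than prove is your fact (i), that $\cM \otimes_k \cA^{\mathrm{op}}$ is simple when $\cM$ is central simple and $\cA$ is simple; you correctly flag this as the step where centrality of $\cM$ is indispensable, but in a self-contained write-up it needs its own short argument (the usual one: take a nonzero two-sided ideal, choose an element $\sum_i m_i \otimes a_i$ of minimal length with the $a_i$ linearly independent, use simplicity of $\cM$ to normalize $m_1 = 1$, and use commutators with $\cM \otimes 1$ plus $Z(\cM) = k$ to reduce to an element $1 \otimes a$, whence the ideal is everything). With that supplied, your proof is complete and would serve as a legitimate replacement for the paper's citation.
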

Given two subalgebras $\cA,\,\cB \subseteq \cM$, we can construct a new algebra in two
ways: (i) take the tensor product $\cA \x \cB$ and (ii) take the subalgebra of $\cM$ generated by
$\cA$ and $\cB$, i.\,e., the smallest subalgebra of $\cM$ containing both $\cA$ and $\cB$, and
which we shall denote by $\cA \vee \cB$. These two constructions somewhat resemble the notion of internal and external
direct sums: in the first case, we consider $\cA$ and $\cB$ to be completely unrelated to each other,
so that this tensor product may be called \anfEngl{external}, while in the second (\anfEngl{internal})
case, we view them as substructures of the larger structure $\cM$, but where we have to introduce
some \anfEngl{non-overlapping} condition. This insight may be useful in understanding the following
theorem, the bipartite case of Zanardi's Theorem~\cite{zanardi_quantum_2004,zanardi_virtual_2001}. \renewcommand{\labelenumi}{(\roman{enumi})}
\begin{Theorem}[Induced tensor product structures]\label{Zanardi}\hfill\\
  Consider the full matrix algebra $\cM_d$ on the finite-dimensional Hilbert space $\cH = \C^d$ and
  two subalgebras $\cA$ and $\cB$ of $\cM_d$, for which there hold 
  \begin{enumerate}
    \item Independence: $[\cA,\,\cB] = \Mg{0}$, i.\,e. $[a,\,b] = 0$ for all $a \in \cA$ and $b \in \cB$; and
    \item Completeness: $\cA \x \cB \isom \cA \vee \cB = \cM_d$.
  \end{enumerate}
  Then, $\cA$ and $\cB$ induce a tensor product structure on $\C^d$, i.e., there exist two Hilbert spaces
  $\C^k$ and $\C^l$, $d = k \cdot l$, and a unitary mapping $\Fkt{U}{\C^k \x \C^l}{\C^d}$, such that
  $\cA = U(\cM_k \x \Eins_l)U^\dagger$ and $\cB = U(\Eins_k \x \cM_l)U^\dagger$.
  In particular, $\cA$ and $\cB$ are isomorphic to $\cM_k$ and $\cM_l$, respectively.
\end{Theorem}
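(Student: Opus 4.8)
My plan is to read the two hypotheses as statements about commutants and dimensions, and then let the representation theory of finite-dimensional $\dagger$-closed algebras do the work, with Lemma~\ref{DCT} supplying the crucial dimension identity. The independence condition $[\cA,\cB]=\Mg{0}$ says exactly that $\cB\subseteq\cA^\prime$ and $\cA\subseteq\cB^\prime$. The completeness condition provides the counting input: since $\cA$ and $\cB$ commute, the multiplication map $a\x b\mapsto ab$ is a surjective algebra homomorphism $\cA\x\cB\to\cA\vee\cB$, and the hypothesis that it is an isomorphism onto $\cM_d$ yields $\dim\cA\cdot\dim\cB=\dim\cM_d=d^2$.

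First I would establish that $\cA$ (and, by symmetry, $\cB$) is simple, so that Lemma~\ref{DCT} applies. Since an algebra of observables is $\dagger$-closed, $\cA$ is semisimple, hence by Artin--Wedderburn $\cA\isom\bigoplus_j\cM_{k_j}$. If there were more than one block, then $\cA\x\cB\isom\bigoplus_j(\cM_{k_j}\x\cB)$ would have centre of dimension greater than one and so could not be isomorphic to the central algebra $\cM_d$; hence the sum has a single term and $\cA\isom\cM_k$ is simple for some $k$.

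Next I would invoke the structure of the defining representation of a simple $\dagger$-algebra. Because $\cA\isom\cM_k$ and $\C$ is algebraically closed, $\cA$ has a unique irreducible module $\C^k$, and its action on $\C^d$ is a direct sum of $m$ mutually orthogonal, unitarily equivalent copies of $\C^k$. Choosing an orthonormal basis adapted to this isotypic decomposition gives $d=k\cdot m$ together with a \emph{unitary} $\Fkt{U}{\C^k\x\C^m}{\C^d}$ such that $\cA=U(\cM_k\x\Eins_m)U^\dagger$, and a direct computation of the commutant in this basis gives $\cA^\prime=U(\Eins_k\x\cM_m)U^\dagger$.

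Finally I would identify $\cB$ with $\cA^\prime$ by a dimension match: Lemma~\ref{DCT} gives $\dim\cA\cdot\dim\cA^\prime=d^2$, while completeness gave $\dim\cA\cdot\dim\cB=d^2$, so $\dim\cB=\dim\cA^\prime$; combined with the inclusion $\cB\subseteq\cA^\prime$ from independence this forces $\cB=\cA^\prime=U(\Eins_k\x\cM_m)U^\dagger$. Setting $l=m$ then completes the proof, with $\cA\isom\cM_k$ and $\cB\isom\cM_l$. The step I expect to be the real obstacle is the third one: extracting from abstract simplicity the concrete isotypic decomposition $\C^d\isom\C^k\x\C^m$ and, crucially, arranging the intertwiner to be \emph{unitary} rather than merely an invertible similarity. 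This is precisely where the $\dagger$-structure must be used, through orthogonality of the isotypic components and unitary equivalence of the irreducible summands; by contrast, once this normal form is in hand, the roles of Lemma~\ref{DCT} and of completeness reduce to forcing the inclusion $\cB\subseteq\cA^\prime$ to be an equality.
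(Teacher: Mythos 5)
Your proposal is correct and follows essentially the same route as the paper: decompose the representation of $\cA$ on $\C^d$ into isotypic pieces, use completeness to force a single block (the paper counts $\dim(\cA \vee \cA^\prime) = \sum_i (k_i l_i)^2 < d^2$ when there is more than one block, whereas you compare centres of $\cA \x \cB$ and $\cM_d$ --- both work), and then combine Lemma~\ref{DCT} with $\dim\cA\cdot\dim\cB = d^2$ to upgrade the inclusion $\cB \subseteq \cA^\prime$ to equality. The step you flagged as the real obstacle --- obtaining a genuinely \emph{unitary} intertwiner for the isotypic decomposition --- is exactly what the paper outsources to Takesaki's structure theorem for finite-dimensional $*$-algebras, i.e.\ the $\dagger$-closedness input you anticipated.
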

\begin{proof}
  The algebra $\cA$ is unitarily equivalent to the direct sum of irreducible parts $\cM_{k_i}$
  with $k_1,\,k_2,\,\dots,\,k_n \in \N$, counted with multiplicities~$l_i$ \cite[Th. I.11.9, pp. 53--54]{Takesaki}; note that $\sum_{i = 1}^{n} k_i \cdot l_i = d$.
  In~other words, there exists a unitary operator $\Fkt{U}{\bigoplus_{i = 1}^{n} (\C^{k_i} \x \C^{l_i})}{\C^d}$
  such that $\cA = U\bigl[\bigoplus_{i = 1}^{n} \cM_{k_i} \x \Eins_{l_i}\bigr]U^\dagger$.
  By condition (i), we then have $\cB \subseteq \cA^\prime = U\bigl[\bigoplus_{i = 1}^{n} \Eins_{k_i} \x \cM_{l_i}\bigr]U^\dagger$ and $\cA \vee \cB \subseteq \cA \vee \cA^\prime \linebreak = U[\bigoplus_{i = 1}^{n} \cM_{k_i} \x \cM_{l_i}]U^\dagger$, $\dim(\cA \vee \cA^\prime) = \sum_{i = 1}^{n} (k_i \cdot l_i)^2$. In view of
  $\dim \cM_d = d^2$, condition (ii) implies $n = 1$, so that $\cA = U(\cM_k \x \Eins_l)U^\dagger$ with
  $k \cdot l = d$, where we removed the subscripts. As $\cA$ is simple, \mbox{$\dim \cA \cdot \dim \cA^\prime = \dim \cM_d$} by lemma~\ref{DCT}. If the inclusion $\cB \subseteq \cA^\prime$ were proper, we had
  $\dim \cB < \dim \cA^\prime$ and $\dim \cA \cdot \dim \cB < \dim \cM_d$ in contradiction to assumption~(ii).
\end{proof}
The formulation in \cite{zanardi_quantum_2004} also includes the physics requirement of local accessibility; this is a matter important for practical feasibility but does not affect the mathematical structure.  Theorem \ref{Zanardi} can be extended to the mulitpartite case.  
\begin{Corollary}[Zanardi's Theorem]\label{ZanardiMulti}\hfill\\
  Consider algebras $\cA_1,\,\dots,\,\cA_N \subseteq \cM_d$, such that
  \begin{enumerate}
    \item Independence: $[\cA_i,\,\cA_j] = \Mg{0}$ for all pairs $i \neq j$
    \item Completeness:    $\bigotimes_{i = 1}^{N} \cA_i \isom \bigvee_{i = 1}^{N} \cA_i = \cM_d$.
    \end{enumerate}
  Then there exist Hilbert spaces $\C^{k_1},\,\dots,\,\C^{k_N}$ with \mbox{$d = \prod_{i = 1}^{N} k_i$},
  and a unitary mapping $\Fkt{U}{\bigotimes_{i = 1}^{N} \C^{k_i}}{\C^d}$, such that
  $\cA_i = U(\bigotimes_{j = 1}^{k-1} \Eins_{k_j} \x \cM_{k_i} \x \bigotimes_{j = k+1}^{N} \Eins_{k_j})U^\dagger$
  $\isom \cM_{k_j}$ for all $i \in \MgE{N}$.
\end{Corollary}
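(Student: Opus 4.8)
The plan is to proceed by induction on the number $N$ of subalgebras, using the already-proven bipartite Theorem~\ref{Zanardi} as the engine at each step. The base case $N = 1$ is trivial: completeness forces $\cA_1 = \cM_d$, so $k_1 = d$ and $U = \Eins$. For the inductive step I would assume the statement for $N-1$ algebras and, given $\cA_1,\dots,\cA_N$ satisfying (i) and (ii), single out $\cA_N$ and set $\cB := \bigvee_{i=1}^{N-1}\cA_i$. The idea is to apply Theorem~\ref{Zanardi} to the pair $(\cA_N,\cB)$ to peel off one factor $\C^{k_N}$, and then apply the inductive hypothesis to $\cA_1,\dots,\cA_{N-1}$ regarded as subalgebras of $\cB\isom\cM_l$, where $l = d/k_N$.

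The substantive part is checking that $(\cA_N,\cB)$ satisfies the bipartite hypotheses. Independence is immediate: $\cA_N$ commutes with each generator $\cA_i$ ($i<N$) by (i), and since the commutant of $\cA_N$ is itself an algebra it must contain the generated algebra $\cB$. Completeness requires $\cA_N\vee\cB=\cM_d$, which is just $\bigvee_{i=1}^N\cA_i=\cM_d$ from (ii), together with the tensor isomorphism $\cA_N\x\cB\isom\cA_N\vee\cB$; establishing this last isomorphism is the crux. Here I would use that for pairwise commuting unital subalgebras the multiplication map $a_1\x\cdots\x a_m\mapsto a_1\cdots a_m$ is a surjective algebra homomorphism onto the algebra they generate, so that $\dim$ of the generated algebra is at most $\prod\dim$, with equality exactly when the corresponding $\isom$ holds. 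Applying this to the surjections $\bigotimes_{i=1}^{N-1}\cA_i\twoheadrightarrow\cB$ and $\cA_N\x\cB\twoheadrightarrow\cM_d$, and using that (ii) makes $\bigotimes_{i=1}^N\cA_i\isom\cM_d$ an isomorphism (hence $\prod_{i=1}^N\dim\cA_i=d^2$), I obtain the squeeze
\[
  d^2 \le \dim\cA_N\cdot\dim\cB \le \dim\cA_N\cdot\prod_{i=1}^{N-1}\dim\cA_i = \prod_{i=1}^{N}\dim\cA_i = d^2 ,
\]
forcing equality throughout. The first equality gives $\cA_N\x\cB\isom\cM_d$, so Theorem~\ref{Zanardi} supplies a unitary $\Fkt{U}{\C^{k_N}\x\C^l}{\C^d}$ with $\cA_N = U(\cM_{k_N}\x\Eins_l)U^\dagger$ and $\cB = U(\Eins_{k_N}\x\cM_l)U^\dagger$, while the second equality, $\dim\cB=\prod_{i=1}^{N-1}\dim\cA_i$, simultaneously shows $\bigotimes_{i=1}^{N-1}\cA_i\isom\cB$, i.e.\ that $\cA_1,\dots,\cA_{N-1}$ satisfy completeness inside $\cB\isom\cM_l$.

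Finally I would invoke the inductive hypothesis. The algebras $\cA_1,\dots,\cA_{N-1}$ are commuting subalgebras of $\cM_l$ (via $\cB\isom\cM_l$) with $\bigvee_{i=1}^{N-1}\cA_i=\cB$ and the completeness just established, so there is a unitary $\Fkt{V}{\bigotimes_{i=1}^{N-1}\C^{k_i}}{\C^l}$, with $l=\prod_{i=1}^{N-1}k_i$, realizing each $\cA_i$ as the $i$-th matrix factor of $\cM_l$. Composing, $W:=U(\Eins_{k_N}\x V)$ is a unitary from $\C^{k_N}\x\bigotimes_{i=1}^{N-1}\C^{k_i}$ to $\C^d$ that conjugates $\cA_N$ and each $\cA_i$ ($i<N$) into the correct single-factor matrix algebras, and a fixed permutation of the tensor factors restores the canonical ordering $i=1,\dots,N$; since $d=k_N\cdot l=\prod_{i=1}^N k_i$, the induction closes. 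The only delicate step is the dimension squeeze above, which does double duty by yielding both the bipartite completeness for $(\cA_N,\cB)$ and the reduced completeness for the sub-collection; everything else is bookkeeping with commuting algebras and composition of unitaries.
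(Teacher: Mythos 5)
Your proof is correct and takes essentially the same approach as the paper, whose entire proof of this corollary is the single line ``Set $\cA := \cA_1$ and $\cB := \bigvee_{i = 2}^{N} \cA_i$ and proceed by induction using Theorem~\ref{Zanardi}.'' Your dimension-squeeze argument supplies exactly the verification the paper leaves implicit, namely that the pair $(\cA_N,\cB)$ inherits independence and completeness and that the reduced collection $\cA_1,\dots,\cA_{N-1}$ satisfies completeness inside $\cB \isom \cM_l$.
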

\begin{proof}
  Set $\cA := \cA_1$ and $\cB := \bigvee_{i = 2}^{N} \cA_i$ and proceed by induction using theorem \ref{Zanardi}.
\end{proof}

\par Our subsequent construction relies on a property of representations of $\mathfrak{su}(2)$ on
$\C^{2s+1}$, which we will establish briefly.
The Lie algebra $\mathfrak{su}(2)$ can be defined as the complex linear hull of three (abstract) generators
$\hat{S}_x$, $\hat{S}_y$ and $\hat{S}_z$ which fulfill the commutation relation $[\hat{S}_i, \hat{S}_j] =
\iE \hbar \varepsilon_{ijk} \hat{S}_k$, where we set $\hbar = 1$ for the rest of this letter. It is well-known
that the eigenvalues of the representing operators $\Mg{\mS_x,\,\mS_y,\,\mS_z}$ on $\C^{2s+1}$
(the \mbox{spin-$s$} representation),
$s \in \Mg{\frac{1}{2},\,1,\,\frac{3}{2},\,2,\,\dots}$, have eigenvalues $-s,\,-s+1,\,\dots,\,+s$.
In the following, we will work with the equivalent set of operators $\Mg{\hat{S}_z,\,\hat{S}_+,\,\hat{S}_-}$, where
$\hat{S}_\pm := \hat{S}_x \pm \iE \hat{S}_y$. We construct representation matrices
$\mS_i = (\mS^{(i)}_{m,m^\prime})_{m,m^\prime = -s}^{s} \in \cM_{2s+1}$ for $i \in \Mg{z,+,-}$~by
\begin{subequations}\begin{align}
  \mS^{(z)}_{m,m^\prime}   &= m \delta_{m,m'} \quad\mbox{and}\\
  \mS^{(\pm)}_{m,m^\prime} &= \sqrt{s(s+1)-m(m\pm1)} \delta_{m,m'\pm 1},
\end{align}\end{subequations}
where we choose $\mS_z$ to be diagonal; we can always achieve this by some unitary transformation.  For $s=\frac{1}{2}$, the basis generators are proportional to the Pauli matrices $\mS_i = \frac{1}{2}\sigma_i$.
As a Lie algebra with the Lie bracket composition rule, the operators $\hat{S}_x$, $\hat{S}_y$
and $\hat{S}_z$ generate a (proper) subalgebra of $\cM_{2s+1}$ isomorphic to $\mathfrak{su}(2)$. We will now show
that, seen as a matrix algebra with matrix multiplication, they generate the full algebra $\mathcal{M}_{2s+1}$;
in the proof, we use the well-known Vandermonde determinant.
\begin{Lemma}[Vandermonde determinant]\label{Vandermonde}\hfill\\
  Let $x_1,\,\ldots,\,x_n$ be elements in a field $K$ and consider the matrix $A = (a_{ij})_{i,j = 1}^{n}$
  with $a_{ij} = x_i^{j-1}$. Then, there holds $\det A = \prod_{i > j} (x_i - x_j)$, and $A$ is invertible,
  if and only if all the $x_i$ are different.
\end{Lemma}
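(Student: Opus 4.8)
The plan is to prove the determinant formula by induction on $n$ using elementary column operations together with cofactor expansion, and then read off the invertibility statement as an immediate corollary. The base case $n = 1$ is trivial, since then $A = (1)$ and the product $\prod_{i > j}(x_i - x_j)$ is empty, so both sides equal $1$.

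For the inductive step, I would assume the formula holds for $n - 1$ variables and first clear the top row of $A$ by subtracting $x_1$ times column $j - 1$ from column $j$, for $j = n, n-1, \ldots, 2$. It is essential to carry this out from right to left, so that each operation subtracts the \emph{original} column $j-1$ before that column is itself modified; processing left to right would corrupt the computation. Since the new $(i,j)$ entry is $x_i^{j-1} - x_1 x_i^{j-2} = x_i^{j-2}(x_i - x_1)$ for $j \ge 2$, the top row collapses to $(1, 0, \ldots, 0)$ while column $1$ is untouched, and these operations leave the determinant unchanged.

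Expanding along the first row then leaves the $(n-1)\times(n-1)$ minor on rows and columns $2, \ldots, n$, whose $i$-th row carries a common factor $(x_i - x_1)$. Pulling out $\prod_{i=2}^{n}(x_i - x_1)$ leaves exactly the Vandermonde matrix in the variables $x_2, \ldots, x_n$, to which the induction hypothesis applies, giving $\det A = \prod_{i=2}^{n}(x_i - x_1)\cdot\prod_{n \ge i > j \ge 2}(x_i - x_j) = \prod_{i > j}(x_i - x_j)$, as claimed. For the second assertion, since $K$ is a field (hence an integral domain) the product is nonzero precisely when every factor $x_i - x_j$ is nonzero, i.e.\ when the $x_i$ are pairwise distinct; as $\det A \ne 0$ characterizes invertibility, this finishes the proof.

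I expect no serious obstacle, as this is a classical result; the only point demanding care is the bookkeeping in the column-reduction step, specifically performing the operations in the correct right-to-left order and correctly tracking the factor $(x_i - x_1)$ extracted from each row. An alternative I would keep in reserve treats $\det A$ as a polynomial in $x_1, \ldots, x_n$: it vanishes whenever two of the $x_i$ coincide, so each $(x_i - x_j)$ divides it, and a degree count ($\binom{n}{2}$ on each side) together with comparison of leading coefficients fixes the constant to $1$. In that route the delicate step would instead be justifying the constant via the diagonal (leading) monomial.
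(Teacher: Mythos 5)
Your proof is correct. The paper does not actually prove this lemma---it states it as a well-known fact and immediately applies it in Theorem 4---so there is nothing to compare against; your induction via right-to-left column operations is the standard argument, the bookkeeping (the factor $x_i^{j-2}(x_i-x_1)$, the extraction of $\prod_{i=2}^{n}(x_i-x_1)$, and the reduction to the Vandermonde matrix in $x_2,\ldots,x_n$) is handled correctly, and the invertibility claim follows as you say because $K$ is an integral domain.
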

\begin{Theorem}[Algebra generated by rep's of $\mathfrak{su}(2)$]\label{GeneratedAlgebra}\hfill\\
  For $s \in \Mg{\frac{1}{2},\,1,\,\frac{3}{2},\,2,\,\dots}$, the (associative) algebra generated by
  any representation of $\mathfrak{su}(2)$ on $\C^d$, $d = 2s+1$, is the full matrix algebra $\cM_d$.
\end{Theorem}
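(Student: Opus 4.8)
The plan is to show directly that the generated algebra contains every matrix unit $E_{m,m'}$—the matrix with a single $1$ in row $m$ and column $m'$, indices running over $m,m' \in \{-s,\dots,s\}$. Since the $d^2$ matrix units form a basis of $\cM_d$, this establishes the claim. As a preliminary reduction, any representation equivalent to the spin-$s$ one is carried to the explicit matrices $\mS_z,\mS_+,\mS_-$ above by a unitary, and conjugation by a unitary maps $\cM_d$ onto itself; hence it suffices to work with these standard matrices, for which $\mS_z = \diag(-s,-s+1,\dots,s)$ has $d$ pairwise distinct eigenvalues.

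First I would generate all diagonal units $E_{m,m}$ from $\mS_z$ alone. The diagonals of the $d$ powers $\Eins,\mS_z,\mS_z^2,\dots,\mS_z^{d-1}$ are the columns of the Vandermonde matrix $(m^{\,k})$ built from the distinct nodes $m\in\{-s,\dots,s\}$; by Lemma \ref{Vandermonde} this matrix is invertible, so these powers span the full $d$-dimensional space of diagonal matrices. In particular each $E_{m,m}$ is a polynomial in $\mS_z$ and therefore lies in the algebra. (Concretely, $E_{m,m}$ is the spectral projector $\prod_{m'\neq m}(\mS_z-m'\Eins)/(m-m')$.)

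Next I would localize $\mS_+$ and $\mS_-$ with these projectors to extract the nearest-neighbour off-diagonal units. Both $\mS_+$ and $\mS_-$ have nonzero entries only between adjacent levels, so $E_{m+1,m+1}\,\mS_+\,E_{m,m}$ is a scalar multiple of $E_{m+1,m}$, and likewise $E_{m,m}\,\mS_-\,E_{m+1,m+1}$ is a multiple of $E_{m,m+1}$. The relevant coefficients, of the form $\sqrt{s(s+1)-m(m\pm1)}$, vanish only at the boundary values $m=\pm s$ and are therefore nonzero for every adjacent pair; dividing out these scalars puts all units $E_{m+1,m}$ and $E_{m,m+1}$ in the algebra. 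Finally, using $E_{i,j}E_{k,l}=\delta_{j,k}E_{i,l}$ I would chain these adjacent units into arbitrary ones—$E_{m',m}=E_{m',m'-1}\cdots E_{m+1,m}$ for $m'>m$, and the transposed products for $m'<m$—obtaining all matrix units and hence all of $\cM_d$.

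The substantive step—indeed essentially the only one—is generating the diagonal projectors $E_{m,m}$, and this is exactly where the distinctness of the eigenvalues of $\mS_z$ together with the Vandermonde determinant of Lemma \ref{Vandermonde} is indispensable: if $\mS_z$ had a repeated eigenvalue (as would happen for a reducible representation) the diagonal matrices could not be recovered and the generated algebra would be a proper subalgebra. Once the projectors are in hand, the extraction of the off-diagonal units and their multiplication is routine matrix bookkeeping, needing only that the relevant entries of $\mS_\pm$ do not vanish, which holds throughout the admissible range of $m$.
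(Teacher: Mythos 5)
Your proof is correct and follows essentially the same route as the paper: the distinct eigenvalues of $\mS_z$ together with the Vandermonde determinant of Lemma~\ref{Vandermonde} yield all diagonal matrices, and the non-vanishing in-range matrix elements of $\mS_\pm$ then fill in the off-diagonal part (you do this via matrix units chained from adjacent ones, the paper via shifting diagonal matrices onto off-diagonal strips with $(\mS_\pm)^n$ --- a cosmetic difference). The only detail the paper adds that you gloss over is an explicit construction of $\Eins_d$ inside the generated algebra, via $s(s+1)\Eins_d = \hat{S}_z^2 + \hat{S}_+\hat{S}_- - \hat{S}_z$, which matters only if one does not take the generated associative algebra to be unital by convention.
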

\begin{proof}
  The $d$ matrices $(\mS_z)^n$, with $n \in \{0,\ldots,d-1\}$, including the identity $(\mS_z)^0=\Eins_d$ which we can construct by $[s(s+1) \hbar^2] \Eins_d = \hat{\vec S}^2 = \hat{S}_z^2 + \hat{S}_+\hat{S}_- - \hbar \hat{S}_z$,  are diagonal with eigenvalues $(-s)^n$, $(-s+1)^n$, $\dots$, $s^n$. If we identify
  each diagonal matrix $(\mS_z)^n$ with a column vector in $\C^d$, then we can generate all diagonal
  matrices in $\cM_d$, provided these vectors span $\C^d$, i.e. the $d \times d$ matrix constructed of
  the $d$ column vectors is invertible. Since $\mS_z$ is non-degenerate, this is true by lemma \ref{Vandermonde}.
Further, we can decompose any matrix into diagonal and off-diagonal strips and use $(\mS_\pm)^{n}$
  to shift diagonal matrices to any off-diagonal to construct all of $\cM_d$.
\end{proof}
\par We shall now show by finite construction of the inducing observable subalgebas that any state on a finite-dimensional Hilbert space can have
arbitrary bipartite entanglement properties as long as the notion of locality is chosen appropriately.  Previously, special cases of this theorem proved the existence of tensor product structures for which any pure state is separable~\cite{cunha} and of observables that will detect non-local correlations in any pure state~\cite{delatorre}.  
\begin{Theorem}[Tailored Observables Theorem]\label{tailor}\hfill\\
  Let $\cH = \C^d$ be a Hilbert space  with an orthonormal basis $(\ket{i})_{i = 1}^{d}$, and let
  $d = k \cdot l$ with $k,\,l \in \N$. Then, for every state $\ket{\Psi} = \sum_{i = 1}^{d} c_i \ket{i}$
  and every $\lambda_1,\,\dots,\,\lambda_{\min\Mg{k,\,l}} \in \C$ with
  $\sum_{i = 1}^{d} \betrag{c_i}^2 = \sum_{i = 1}^{\min\Mg{k,\,l}} \betrag{\lambda_i}^2$,
  there exist algebras $\cA$ and $\cB$ satisfying the conditions of theorem \ref{Zanardi}, and a unitary
  operator $U$, such that
  $\ket{\Psi} = U\sum_{i = 1}^{\min\Mg{k,l}} \lambda_i \ket{i}_A \ket{i}_B$ with orthonormal bases
  $(\ket{i}_A)_{i = 1}^{k}$ and $(\ket{i}_B)_{i = 1}^{l}$ of the Hilbert spaces $\cH_A = \C^k$ and $\cH_B = \C^l$, respectively.
\end{Theorem}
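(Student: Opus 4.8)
The plan is to run Theorem \ref{Zanardi} in reverse: instead of starting from observables and discovering a tensor product structure, I first build the structured space in which $\ket{\Psi}$ already has the desired Schmidt decomposition, and only then transport that picture onto $\cH = \C^d$ by a suitable unitary. Concretely, I would form the structured Hilbert space $\cH_A \x \cH_B = \C^k \x \C^l$ with its standard orthonormal bases $(\ket{i}_A)_{i=1}^{k}$ and $(\ket{i}_B)_{i=1}^{l}$, and set $\ket{\Psi'} := \sum_{i=1}^{\min\Mg{k,l}} \lambda_i \ket{i}_A \ket{i}_B$. By construction $\ket{\Psi'}$ has precisely the prescribed Schmidt coefficients $\lambda_i$ relative to the canonical local algebras $\cA' := \cM_k \x \Eins_l$ and $\cB' := \Eins_k \x \cM_l$, which satisfy conditions (i) and (ii) of Theorem \ref{Zanardi} trivially. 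The entire remaining task is to carry this structure over to $\C^d$.

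Next I would produce a unitary $\Fkt{U}{\C^k \x \C^l}{\C^d}$ with $U\ket{\Psi'} = \ket{\Psi}$. Identifying $\C^k \x \C^l \isom \C^d$ through $\ket{a}_A\ket{b}_B \mapsto \ket{(a-1)l+b}$, both vectors live in $\C^d$, and the hypothesis $\sum_{i} \betrag{c_i}^2 = \sum_{i} \betrag{\lambda_i}^2$ says exactly that $\lVert\ket{\Psi'}\rVert = \lVert\ket{\Psi}\rVert$. Any two vectors of equal norm in a finite-dimensional Hilbert space are unitarily equivalent: I would extend the normalized vectors $\ket{\Psi'}/\lVert\ket{\Psi'}\rVert$ and $\ket{\Psi}/\lVert\ket{\Psi}\rVert$ to orthonormal bases (by Gram--Schmidt, which keeps the construction finite and explicit) and let $U$ send the first basis to the second. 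Since the two norms coincide, $U\ket{\Psi'} = \ket{\Psi}$ follows.

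Finally I would set $\cA := U\cA'U^\dagger$ and $\cB := U\cB'U^\dagger$. Unitary conjugation is an algebra isomorphism that preserves commutators, so $[\cA,\cB] = U[\cA',\cB']U^\dagger = \Mg{0}$, while $\cA \vee \cB = U(\cA' \vee \cB')U^\dagger = U\cM_d U^\dagger = \cM_d$ and $\cA \x \cB \isom \cA' \x \cB' \isom \cA \vee \cB$; hence $\cA$ and $\cB$ meet the hypotheses of Theorem \ref{Zanardi} and induce the same $k \times l$ structure, now tailored so that $\ket{\Psi} = U\ket{\Psi'}$ carries the Schmidt coefficients $\lambda_i$. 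To exhibit the inducing observables as a finite generating set, I would invoke Theorem \ref{GeneratedAlgebra}: $\cA'$ is generated by the spin-$\frac{k-1}{2}$ representation matrices $\mS_z \x \Eins_l$ and $\mS_\pm \x \Eins_l$, and $\cB'$ by $\Eins_k \x \mS_z$ and $\Eins_k \x \mS_\pm$, so $\cA$ and $\cB$ are generated by their $U$-conjugates.

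I do not anticipate a genuine obstacle here; as the authors note, the statement is at bottom the fact that equal-norm vectors in $\C^d$ are unitarily equivalent, combined with the stability of both Zanardi conditions under conjugation. The only point deserving real care is that $U$ be \emph{exhibited} by a finite, explicit procedure rather than merely asserted to exist, since the paper advertises an explicit construction; Gram--Schmidt supplies such a $U$, and Theorem \ref{GeneratedAlgebra} then converts the abstract subalgebras $\cA,\cB$ into a concrete finite list of generating observables on $\C^d$.
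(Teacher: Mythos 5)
Your proposal is correct and follows essentially the same route as the paper's own proof: construct the target Schmidt-form vector in $\C^k \x \C^l$, build $U$ by extending both vectors to orthonormal bases via Gram--Schmidt, pull back the canonical local algebras by conjugation, and invoke Theorem \ref{GeneratedAlgebra} to exhibit the finite set of $\mathfrak{su}(2)$-representation generators. The only cosmetic difference is that you first define $\cA,\cB$ as full conjugated matrix algebras and then identify generators, whereas the paper defines them directly as the algebras generated by the conjugated spin operators; the content is identical.
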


\begin{proof}
  Without loss of generality, we assume $k \leq l$; an arbitrarily entangled state on $\C^k \x \C^l$ may then
  be written in its Schmidt form as $\ket{\phi} = \sum_{i = 1}^{k} \lambda_i \ket{i}_A \ket{i}_B$ where
  $(\ket{i}_A)_{i = 1}^{k}$ and $(\ket{i}_B)_{i = 1}^{l}$ are eigenvectors of the $k$- resp. $l$-dimensional representations $\mS^{(A)}_z$ and $\mS^{(B)}_z$ of $\hat{S}_z$.
  Using the Gram-Schmidt procedure, we can obtain orthonormal bases $\Mg{\ket{\Psi_1},\,\dots,\,\ket{\Psi_d}}$
  of $\C^d$ and $\Mg{\ket{\phi_1},\,\dots,\,\ket{\phi_{k \cdot l}}}$ of $\C^k \x \C^l$ with
  $\ket{\Psi_1} = \ket{\Psi}$ and $\ket{\phi_1} = \ket{\phi}$, and we set $U\ket{\phi_i} := \ket{\Psi_i}$ for all $i \in \MgE{d}$. The algebras $\cA$ and $\cB$ are chosen to be generated by the operators
  $U(\mS^{(A)}_j \otimes \Eins)U^\dag$ and $U(\Eins \otimes \mS^{(B)}_j )U^\dag$ for $j \in \Mg{x,y,z}$;
  by theorem \ref{GeneratedAlgebra} they fulfill the conditions from theorem \ref{Zanardi}.
\end{proof}

\emph{Example}: Consider the simplest case: an unstructured Hilbert space $\HS=\mathbb{C}^4$ with basis $\Mg{\ket{0},\ldots,\ket{3}}$, and a pure state $\kt{\Psi}=\kt{0}$~\cite{comment}.  We want to tailor observable subalgebras $\mathcal{A}$ and $\mathcal{B}$ that induce a factorization $\HS_A\otimes\HS_B=\mathbb{C}^2\otimes\mathbb{C}^2$ with respect to which $\kt{\Psi}$  has the same entanglement  as the state $\ket{\phi}=\lambda_1 \ket{0}_{A'}\ket{0}_{B'}+\lambda_2 \ket{1}_{A'}\ket{1}_{B'} $ has with respect to a model Hilbert space $\HS'=\HS_{A'}\otimes\HS_{B'}$ with inducing subalgebras $\mathcal{A}'$ and $\mathcal{B}'$. To do this, 
we first make the identifications $\kt{j}_{A'}\kt{k}_{B'}=\kt{jk}=\kt{2j+k}$, e.g. $\ket{01}=\kt{1}$ and $\ket{11}=\kt{3}$.  Then we
define a unitary operator that maps $\ket{\Psi}$ into the state $\ket{\phi}=U\kt{\Psi}=\lambda_1 \kt{0} + \lambda _2 \kt{3}$. A simple choice is
\begin{equation}
  U = \begin{pmatrix} \lambda_1 & 0 & 0 & \lambda_2 \\ 0 & 1 & 0 & 0 \\
      0 & 0 & 1 & 0 \\ -\lambda_2 & 0 & 0 & \lambda_1 \end{pmatrix}.
\end{equation}
This unitary operator and identification are not unique, and the freedom here could be exploited if there were additional practical constraints on the types of measurements.  To make a connection to the algebra of observables, we define the subalgebra $\mathcal{A}'$ ($\mathcal{B}'$) as the algebra generated by the operators $\mS_j^{A'}=\frac{1}{2}\sigma_j\otimes\Eins$ ($\mS_j^{B'}=\frac{1}{2}\Eins\otimes\sigma_j$) with $j\in\{x,y,z\}$.  The basis vectors $\kt{jk}$ are the joint eigenvectors of $\sigma_z\otimes\Eins$ and $\Eins\otimes\sigma_z$.  Then we can use the operator $U$ to map the subalgebras $\mathcal{A}'$ and $\mathcal{B}'$ back into their tailored representations $\mathcal{A}$ and $\mathcal{B}$ in the original unstructured Hilbert space $\HS$.  For example, the generators  of the subalgebra $\mathcal{A}$ are represented in the unstructured Hilbert space basis  as
\begin{eqnarray}
 U^\dag (\sigma_x\otimes\Eins) U &=& \lambda_1\sigma_x\otimes \Eins +\lambda_2 \sigma_z\otimes\sigma_x\nonumber\\
U^\dag( \sigma_y\otimes\Eins )U &=& \lambda_1\sigma_y\otimes \Eins -\lambda_2 \sigma_z\otimes\sigma_y\nonumber\\
 U^\dag (\sigma_z\otimes\Eins )U &=&  \lambda_1^2\sigma_z\otimes \Eins -\lambda_2^2\Eins\otimes\sigma_z\nonumber\\
&&  - \lambda_1\lambda_2 \sigma_x\otimes\sigma_x +\lambda_1\lambda_2 \sigma_y\otimes\sigma_y
\end{eqnarray}
and the generators of $\mathcal{B}$ can be found by transposing the order of the Pauli matrices in every term.  Non-local operators like $U^\dag(\sigma_z\otimes\sigma_z)U$, required for observing Bell-type inequalities, can also be constructed this way.  In the case of $d=4$, only linear factors of the tensored Pauli matrices appear in the tailored observables; in the case $d>4$, powers of the $\mS_j^{A'}$ and $\mS_j^{B'}$ matrices will appear.
\par Specifying $\lambda_1=\lambda_2=1/\sqrt{2}$, this previous example shows how to construct subalgebras of observables that induce maximal bipartite entanglement in an arbitrary pure state and proves that such an construction can be done in a finite number of steps.  Extending to arbitrary initial states and higher dimensions only requires more computational effort to determine an appropriate $U$, but is in principle no more complicated.  Theorem \ref{tailor} can be extended to multipartite tensor product structures.
\begin{Corollary}[Extended Tailored Observables Theorem]\label{tailor2}
  Given a finite-dimensional Hilbert space $\cH = \C^d$ with an orthonormal basis $(\ket{i})_{i = 1}^{d}$, a state $\ket{\Psi} = \sum_{i = 1}^{d} c_i \ket{i}$, and a factorization \mbox{$d = \prod_{i = 1}^{N} k_i$} with
  $k_i \in \N$, there exist operator algebras $\cA_1,\,\dots,\,\cA_N$ that fulfill the conditions of
  corollary~\ref{ZanardiMulti}, and there exists a unitary mapping $\Fkt{U}{\bigotimes_{i = 1}^{N} \C^{k_i}}{\C^d}$
  satisfying $\cA_i = U(\Eins_{k_1} \x \cdots \x \Eins_{k_{i-1}} \x  \cM_{k_i} \x \Eins_{k_{i+1}}\x\cdots \x \Eins_{k_N})U^\dagger$ for $i \in \MgE{N}$
  such that $\ket{\phi}=U^\dagger \kt{\Psi}$ has an expansion
  \begin{equation}\label{multiexp}
    \ket{\phi} = \sum\nolimits_{i_1 = 1}^{k_1} \dots \sum\nolimits_{i_N = 1}^{k_N}
      \Bigl(c_{i_1 i_2 \ldots i_N} \ket{i}_1 \ket{i}_2 \dots \ket{i}_N\Bigr).
  \end{equation}
\end{Corollary}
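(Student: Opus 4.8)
The plan is to parallel the proof of Theorem~\ref{tailor}, replacing the bipartite Zanardi theorem by its multipartite form (Corollary~\ref{ZanardiMulti}) and invoking Theorem~\ref{GeneratedAlgebra} once for each tensor factor. The conceptual point to register first is that, unlike in the bipartite case, no entanglement data is prescribed here: a generic vector of $\bigotimes_{i=1}^{N}\C^{k_i}$ has no canonical Schmidt-type normal form, and \emph{every} such vector automatically admits an expansion of the shape~(\ref{multiexp}) in the product basis. The coefficients $c_{i_1\ldots i_N}$ are therefore not free parameters to be matched but are simply read off from $U^\dagger\ket{\Psi}$ once the inducing structure is in place; the entire content of the statement is the existence of admissible algebras $\cA_1,\dots,\cA_N$ and a unitary $U$.

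Concretely, I would fix any unitary $\Fkt{U}{\bigotimes_{i=1}^{N}\C^{k_i}}{\C^d}$—for instance the one sending the product basis vector $\ket{i_1}_1\cdots\ket{i_N}_N$ to $\ket{i}$ under a mixed-radix identification of indices, as in the $d=4$ example—and define each $\cA_i$ as the associative algebra generated by the conjugated operators $U(\Eins_{k_1}\x\cdots\x\mS_j^{(i)}\x\cdots\x\Eins_{k_N})U^\dagger$ for $j\in\Mg{x,y,z}$, where $\mS_j^{(i)}$ is the spin-$s_i$ representation with $2s_i+1=k_i$. By Theorem~\ref{GeneratedAlgebra} the three generators occupying the $i$-th slot already generate all of $\cM_{k_i}$, so $\cA_i=U(\Eins_{k_1}\x\cdots\x\cM_{k_i}\x\cdots\x\Eins_{k_N})U^\dagger$, exactly the form the corollary demands.

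It then remains only to verify the hypotheses of Corollary~\ref{ZanardiMulti} for the standard slot algebras, both properties being invariant under the fixed conjugation by $U$. Independence $[\cA_i,\cA_j]=\Mg{0}$ for $i\neq j$ is immediate, since operators supported on distinct tensor factors commute. Completeness $\bigotimes_{i=1}^{N}\cA_i\isom\bigvee_{i=1}^{N}\cA_i=\cM_d$ is the easy direction of the structure theorem: the join of the slot algebras contains every elementary tensor $A_1\x\cdots\x A_N$ and hence all of $\cM_d$, while the dimension identity $\prod_{i=1}^{N}k_i^2=d^2$ furnishes the isomorphism with the external tensor product. Finally, writing $\ket{\phi}=U^\dagger\ket{\Psi}\in\bigotimes_{i=1}^{N}\C^{k_i}$ and expanding in the product basis produces~(\ref{multiexp}) with $c_{i_1\ldots i_N}=(\bra{i_1}_1\cdots\bra{i_N}_N)U^\dagger\ket{\Psi}$.

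An equivalent route proceeds by induction on $N$, just as Corollary~\ref{ZanardiMulti} is deduced from Theorem~\ref{Zanardi}: one splits off the first leg with Theorem~\ref{tailor} applied to the bipartition $d=k_1\cdot\prod_{i=2}^{N}k_i$, obtaining $\cA_1$ together with its commutant $\cB\isom\cM_{\prod_{i\geq 2}k_i}$, and then decomposes $\cB$ by the induction hypothesis. The step I expect to demand the most care is the bookkeeping in this inductive variant—checking that the sub-factor algebras built on the second leg genuinely lie inside $\cB=\cA_1^\prime$ and that their join exhausts $\cB$, so that adjoining $\cA_1$ restores completeness $\bigvee_i\cA_i=\cM_d$. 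In the direct construction above this difficulty does not arise, which is why I would present that version as the primary argument.
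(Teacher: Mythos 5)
Your proof is correct, and your primary (direct) argument takes a genuinely different route from the paper's. The paper disposes of this corollary in a single sentence --- it ``follows from Theorem~\ref{tailor} by induction'' --- which is essentially your secondary, inductive variant; you rightly flag that the only delicate point there is the bookkeeping (keeping the refined slot algebras inside the commutant of $\cA_1$ and checking that their join restores completeness). Your direct construction sidesteps this: conjugating the $N$ standard slot algebras, each generated by a spin representation and hence equal to the full $\cM_{k_i}$ on its slot by Theorem~\ref{GeneratedAlgebra}, and verifying independence and completeness for slot algebras directly, is cleaner and is the natural $N$-fold generalization of the construction inside the proof of Theorem~\ref{tailor}. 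Your observation that the statement as literally written prescribes no entanglement data --- so that the expansion~(\ref{multiexp}) holds automatically for \emph{any} unitary $U$ --- is a fair criticism of the corollary's formulation. One caveat: the discussion surrounding the corollary (``all entanglement properties can be reproduced in any state'') indicates the intended reading is that the coefficients $c_{i_1 i_2\ldots i_N}$ form a \emph{prescribed} target tensor with $\sum \betrag{c_{i_1\ldots i_N}}^2 = \sum \betrag{c_i}^2$. Under that reading ``fix any unitary'' does not suffice; you must instead choose $U$ by the Gram--Schmidt device from the proof of Theorem~\ref{tailor}, mapping the prescribed product-space vector $\sum c_{i_1\ldots i_N}\ket{i_1}_1\cdots\ket{i_N}_N$ to $\ket{\Psi}$ and extending both vectors to orthonormal bases. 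That is a one-line amendment to your argument, after which everything else you wrote goes through unchanged.
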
 
The proof follows from Theorem \ref{tailor} by induction, except one cannot rely on the Schmidt form for effectively classifying the amount of entanglement. Nevertheless, since any pure state entanglement characterization must be able to be expressed in the form (\ref{multiexp}), we have shown that, in principle, for finite-dimensional systems all entanglement properties can be reproduced in any state by choosing the correct observables.  The explicit construction of the unitary operator may require cleverness to accomplish efficiently, but it exists and can be constructed in finite steps.
\par As a final comment, the Tailored Observables Theorem presented here applies only to pure states in finite dimensions. As stated, it cannot extend to mixed states; for example, a totally-mixed state is represented as $\frac{1}{d} \Eins_d$ in any basis, and therefore is totally mixed in any tensor product structure.  Totally mixed states have no coherences that can be shifted to non-local sectors of the tensor product structure and exploited as entanglement.  An open question is how to construct observables that make a partially mixed state as entangled as possible.  Additionally, a full generalization to infinite dimensions and continuous variables is beyond the scope of this Letter but is of practical and intrinsic interest.

\acknowledgments NLH gratefully acknowledges the hospitality of the Institut f\"{u}r Quantenphysik, Universit\"{a}t Ulm, where this work was begun.  KSR is funded by the project Quantum Optics Repeater Platform (QuOReP) by the German Bundesministerium f\"{u}r Bildung und Forschung (BMBF).

\vspace{-.3cm}


\begin{thebibliography}{99}

\bibitem{zanardi_virtual_2001}
  P. Zanardi, Phys. Rev. Lett. \textbf{87}, 077901 (2001).

\bibitem{zanardi_quantum_2004}
  P. Zanardi, D. A. Lidar and S. Lloyd, Phys. Rev. Lett. \textbf{92}, 060402  (2004).

\bibitem{barnum} H.~Barnum, E.~Knill, G.~Ortiz and L.~Viola, Phys.\ Rev.\ A {\bf 68}, 032308 (2003); H.~Barnum, E.~Knill, G.~Ortiz, R.~Somma, and L.~Viola, Phys.\ Rev.\ Lett.\ {\bf 92}, 107902 (2004).


\bibitem{delatorre} A. C. de la Torre, D. Goyeneche and L Leitao, Eur. J. Phys. {\bf 31}, 325 (2010).

\bibitem{mle} M. Grace, C. Brif, H. Rabitz, I. Walmsley, R. Kosut and D. Lidar, New J. Phys. {\bf 8},  35 (2006).

\bibitem{deco} R. Blume-Kohout, W.H. Zurek, Phys. Rev. A {\bf 73}, 062310 (2006).

\bibitem{oqec} D. Poulin, Phys. Rev. Lett. {\bf 95}, 230504 (2005).

\bibitem{fer} F.~Verstraete and J.I.~Cirac, J.\ Stat.\ Mech.\  P09012 (2005).

\bibitem{cunha} Marcelo O. Terra Cunha, Jacob A Dunningham and Vlatko Vedral, Proc. Roy. Soc. A {\bf 463},  2277 (2007).

\bibitem{osid} N.L.~Harshman and S.~Wickramasekara, Open Sys.\ Inf.\ Dyn.\ \textbf{14},  341 (2007).

\bibitem{scat} N.L.~Harshman and S.~Wickramasekara, Phys.\ Rev.\ Lett.\ {\bf 98}, 080406 (2007); N.L. Harshman and P Singh, J. Phys. A {\bf 41}, 155304 (2008).


\bibitem{tommasini_hydrogen_1998}  P.~Tommasini, E.~Timmermans and A.F.R.~de Toledo Piza, Am.\ J.\ Phys. {\bf 66}, 881 (1998).

\bibitem{Knapp}
  A. W. Knapp, \emph{Advanced Algebra} (Birkh\"{a}user 2007)


\bibitem{comment} A similar example is described in Example 1 of reference \cite{zanardi_quantum_2004}, and although correct in principle, the specific generators of subalgebras given there do not induce the proper tensor product structure to unentangle the Bell basis states.  One of the strengths of our method is that the process of finding generators of subalgebras can be routinized so that no `cleverness' is required.  


\bibitem{Takesaki}
  M. Takesaki, \emph{Theory of Operator Algebras I} (Springer 1979, Reprint 2002)



\end{thebibliography}
\end{document}